\renewcommand{\baselinestretch}{0.987} 
\newcommand{\be}{\begin{equation}}
\newcommand{\ee}{\end{equation}}
\newcommand{\ba}{\begin{array}}
	\newcommand{\ea}{\end{array}}
\newcommand{\nid}{\noindent}
\newcommand{\m}{\hspace{-.05cm}}
\newtheorem{theorem}{Theorem}
\newcommand\floor[1]{\lfloor#1\rfloor}
\newcommand\ceil[1]{\lceil#1\rceil}
\newcommand{\argmin}{\operatornamewithlimits{argmin}}
\newcommand\inlineeqno{\stepcounter{equation}\ (\theequation)}
\renewcommand*{\thefootnote}{\fnsymbol{footnote}}
\def\BibTeX{{\rm B\kern-.05em{\sc i\kern-.025em b}\kern-.08em
		T\kern-.1667em\lower.7ex\hbox{E}\kern-.125emX}}
\begin{document}
	
	\title {Fast Decoder for Overloaded Uniquely Decodable \\  Synchronous Optical CDMA}
	
	\author{ \IEEEauthorblockN{Michel Kulhandjian$^{\dag}$, Hovannes Kulhandjian$^{\ddag}$, Claude D'Amours$^{\dag}$, Halim Yanikomeroglu$^{\dag\dag}$, Gurgen Khachatrian$^{\dag\dag\dag}$
		}\\
		\IEEEauthorblockA{$^{\dag}$School of Electrical Engineering and Computer Science, University of Ottawa, 
			Ottawa, Ontario, K1N 6N5, Canada\\
			E-mail: \texttt{mkk6@buffalo.edu,cdamours@uottawa.ca} \\
			$^{\ddag}$Department of Electrical and Computer Engineering,
			California State University, Fresno,
			Fresno, CA 93740, U.S.A. \\
			E-mail: \texttt{hkulhandjian@csufresno.edu} \\
			$^{\dag\dag}$Systems and Computer Engineering,
			Carleton University,
			Ottawa, ON, K1S 5B6, Canada \\
			E-mail: \texttt{Halim.Yanikomeroglu@sce.carleton.ca}\\
			$^{\dag\dag\dag}$College of Science \& Engineering,
			American University of Armenia, Yerevan 0019, Republic of Armenia \\
			E-mail: \texttt{gurgenkh@aua.am}} 
		\vspace{-0.9 cm}  }
	
	\maketitle
	
	\begin{abstract}
		In this paper, we propose a fast decoder algorithm for uniquely decodable (errorless) code sets for overloaded synchronous optical code-division multiple-access (O-CDMA) systems. The proposed decoder is designed in a such a way that the users can uniquely recover the information bits with a very simple decoder, which uses only a few comparisons. Compared to maximum-likelihood (ML) decoder, which has a high computational complexity for even moderate code lengths, the proposed decoder has much lower computational complexity. Simulation results in terms of bit error rate (BER) demonstrate that the performance of the proposed decoder for a given BER requires only $1-2$ dB higher signal-to-noise ratio (SNR) than the ML decoder.
	\end{abstract}
	
	\begin{IEEEkeywords}
		Optical code-division multiple-access, uniquely decodable (errorless) codes, overloaded synchronous O-CDMA.
	\end{IEEEkeywords}
	
	\section{{Introduction}}
	\renewcommand*{\thefootnote}{\arabic{footnote}}
	Optical code-division multiple-access (O-CDMA) has recently received substantial attention in local area networks (LAN's) where the traffic tends to be bursty \cite{heritage2007}. This is due to the development of large bandwidth fiber-optic communication channels, which introduce several advantages over the conventional networking. One of the key features of O-CDMA is that it allows simultaneous users transmit data asynchronously, with no waiting time through the assignment of the unique sequence code. 
	
	O-CDMA can also be applied in free-space optics (FSO) \cite{gibson2004},   \cite{alzenad2018}. Unlike the fiber-optic communications in which data is transmitted by propagation of light through a fiber in FSO the optical beams are sent through free air \cite{malik2015}.
	
	Conventional CDMA signature codes $\{\pm 1\}$ designed for the radio systems are no longer suitable when they are used in optical systems because of the unipolarity of the incoherently detected signals. Although coherent signal processing in O-CDMA is possible in principle, it is not practical, as it is fairly difficult to maintain  the correct phase of the optical carrier at high frequency. Hence, several unipolar $\{0,1\}$ signature codes have been proposed for incoherent O-CDMA systems, such as algebraic construction of a new family of optical orthogonal codes (OOC) \cite{Fan1989} for use in asynchronous CDMA fiber-optic communication systems. The optical orthogonal code $\mathbf{C}$ is a family of unipolar codes characterized by a quadruple $(n, \omega, \lambda_a, \lambda_c)$, where $n$ denotes the code length, $\omega$ denotes its weight (the number of ones), and $\lambda_a$ and $\lambda_c$ denote the maximum value of the out-of-phase auto-correlation and maximum value of the cross-correlation, respectively. Those two properties are defined as follows:
	
	\textit{1) The Autocorrelation Property:}\vspace{-0.1cm}
	\begin{equation}
	\sum_{t=0}^{n-1} x_t x_{t\oplus\tau} \leq \lambda_a,
	\end{equation}
	for any $\mathbf{x} \in \mathbf{C}$ and any integer $\tau$, $0 < \tau < n$, where $\oplus $ denotes modulo $n$ addition.
	
	\textit{2) The Cross-correlation Property:}\vspace{-0.1cm}
	\begin{equation}
	\sum_{t=0}^{n-1} x_t y_{t\oplus \tau} \leq \lambda_c,
	\end{equation}
	for any $\mathbf{x} \neq \mathbf{y} \in \mathbf{C}$ and any integer $\tau$.
	
	Modified prime codes \cite{Maric1993}, \cite{Svetislav1993} have been proposed for the synchronous O-CDMA systems. Ideal orthogonality between the sequences cannot be obtained even in the synchronous case, as the signals are unipolar in nature, that is, two signals cannot be added up to get zero. This incoherent processing renders codes with good correlation properties, but unequal number of ``ones'' such as Gold sequences, which are not applicable in fiber-optic CDMA systems. Therefore, multiple-access interference (MAI) limits the performance of incoherent O-CDMA systems.
	
	A number of detectors have been proposed to alleviate MAI in the O-CDMA system. Among the detectors are single user detectors such as correlator detector, hard limiter correlator, chip level detector, etc. Verdu \cite{Verdu1999} has proposed an optimal detector, which can be used in O-CDMA systems. However, the optimal detector is exponentially proportional to the number of users and that prohibits its practical implementation. Various suboptimal detection techniques have been proposed with low complexity. These suboptimal approaches can be classified into two categories: linear and non-linear multiuser detectors. Among the linear multiuser detectors are matched filter (MF), minimum-mean-square-error (MMSE), etc. In a non-linear subtractive interference cancellation detector the interference is first estimated and then it is subtracted from the received signal before performing the detection. The cancellation process can be carried out either through successively interference cancellation (SIC) \cite{Kobayashi2001}, or through parallel interference cancellation (PIC) \cite{Guo2000}, \cite{Xue1999}. In non-linear iterative detectors \cite{Wang1999} - \cite{Sasipriya2014} and in probabilistic data association (PDA) \cite{Romano2005} the aim is to suppress the MAI in each iteration in order to improve the overall error performance. Suboptimal polynomial time detectors that are based on the geometric approach are studied in \cite{Manglani2006} - \cite{Najkha2005}. 
	
	Due to the large demand for spectral efficiency in the Internet traffic, in this paper, we consider the overloaded O-CDMA case, where the number of users, K, is greater than the spreading factor, L. In general, the users’ signals cannot be separated by either linear or nonlinear detectors in overloaded systems even in the case of asymptotically vanishing noise. We therefore seek to design spreading codes such that decoding can achieve asymptotically zero probability of error multiuser detection when the signal-to-noise ratio (SNR) becomes arbitrary large. The uniquely decodable (UD) class of codes that guarantee ``errorless'' communication in an ideal (noiseless) synchronous O-CDMA also show a good performance in the presence of noise. Finding those overloaded UD class of codes for noiseless channel is directly related to coin-weighing problem, one of Erd\"{o}s' problems in \cite{Erdos1963}. It is a special case of a general problem and in literature \cite{Shapiro1963} - \cite{Mow2009} authors used the term detecting matrices. Lindstr\"{o}m in \cite{Lindstrom1965} defines the same problem as the detecting set of vectors. Given an integer $q \geq 2$ and
	a finite set alphabet $M$ of rational integers, let $\mathbf{v}_i$ for $1\leq i \leq n$ be $L$-dimensional (column) vectors with all components from $M$ such that the $q^n$ sums\vspace{-0.0cm}
	\begin{equation}\vspace{-0.0cm}
	\sum_{i = 1}^n  \epsilon_i \mathbf{v}_i  \: \: (\epsilon_i = 0,1,2,\dots, q-1)
	\end{equation}
	\nid are all distinctly unique, then $\{ \mathbf{v}_1, \dots, \mathbf{v}_n \}$ are detecting set of vectors. Let $F_q(L)$ be the maximal number of $L$-dimensional vectors forming a detecting set. Let $f_q(n)$ be the minimal vector length for a given $n$ number of vectors. The problem of determining the  $f_q(n)$ as a special case when $q=2$, $M = \{0,1\}$ that can be equivalently expressed as a coin-weighing problem: what is the minimal number of weighings on an accurate scale to determine all false coins in a set of $n$ coins. The choice of coins
	for a weighing must not depend on results of previous weighings. This problem was first introduced by H. S. Shapiro \cite{Shapiro1963} for $n=5$. Only few cases of $n$ was proved the minimal number of $L$, however
	for larger $n$ the $f_2(n)$ has been estimated \cite{Erdos1970}. Lindstr\"{o}m in \cite{Lindstrom1965} gives an explicit construction of $L \times \gamma(L+1)$ binary (alphabet $\{0, 1\}$) and $L \times \gamma(L)+1$ antipodal (alphabet $\{\pm 1\}$) detecting matrices, where $\gamma(L)$ is the number of ones in the binary expansion of all positive integers less than $L$. 
	He also proved that the lower bound in the case of $M=\{0,1\}$ or $\{\pm1\}$ is\vspace{-0.0cm}
	\begin{equation}\vspace{-0.0cm}
	\lim_{n\to\infty} \frac{f_2(n) \log{n}}{n} =2.
	\end{equation}
	Cantor and Mills \cite{Cantor1966} constructed a class of $2^k \times (k+2)2^{(k-1)}$ ternary (alphabet $\{0, \pm 1\}$) detecting matrices for $k \in \mathbb{Z}^+$, which implies that in the case of $M=\{0,\pm1\}$ the lower bound is \vspace{-0.0cm}
	\begin{equation}\vspace{-0.0cm}
	\lim_{n\to\infty} \frac{f_3(n) \log{n}}{n} \leq 2.
	\end{equation}\vspace{0.0cm}
	In the literature, most of the explicit construction algorithms of UD code sets are recursive. It is worth mentioning that the maximum number 
	of vectors of the explicit constructions of binary, antipodal and ternary code sets are $K_{max}^b = \gamma(L+1)$, $K_{max}^a = \gamma(L) +1$ and $K_{max}^t = (k+2)2^{(k-1)}$,
	as shown in Table \ref{table:binary}, Table \ref{table:antipodal} and Table \ref{table:ternary}, respectively. The applications of such codes varies but typically is mostly seen in the noiseless transmission channels. As an example, they can be suitable for the multi-access adder channels \cite{Chang1979} - \cite{Chang1984} and in wired communications, which can double (or more) the bandwidth at modest/moderate expense of the increase in computational cost. The authors in \cite{Marvasti2009} and \cite{Alishahi2012} motivate the overloaded binary UD code sets for the application in O-CDMA fiber-optic communications. 
	
	To the best of our knowledge, there are no known explicit constructions that generates larger than $K_{max}^b$, $K_{max}^a$ and $K_{max}^t$ vectors for a given $L$ in a code set. Several authors have proposed linear decoders in the noiseless scenarios for their explicit construction achieving $K_{max}$. For example, Martirossian and Khachatrian in
	\cite{Khachatrian1989} presented a linear decoder for their explicit construction of binary code sets in the noiseless transmission channel. The linear decoders corresponding 
	to their explicit construction with $K_{max}$ of antipodal and ternary code sets can be found in \cite{Khachatrian1995} - \cite{michel2018} and \cite{Chang1979} - \cite{Khachatrian1998}, respectively. Although such overloaded UD code sets theoretically facilitate a large capacity, their decoding for noisy transmission has always been a greater challenge to deal with. For noisy channel, the proposed decoders stand ineffective to provide an acceptable
	error performance. In general, the efficiency of the whole system is determined by the decoder, which must have a simple design and perform comparably better in a noisy transmission channels.	In fact, in noisy channels, those code sets that have $K_{max}$ vectors need a maximum-likelihood (ML) decoder to determine the received vector,
	a process which is considered NP-hard \cite{Lupas1989}. 
	
	Recently, in \cite{Marvasti2013}, an overloaded synchronous O-CDMA based on unipolar Walsh code (UWC) for fiber-optic communication system was proposed. The number of users of their construction is $K = 2^{k+1}-2$ for the given length $L= 2^k$. This exceeds the maximum $K_{max}^b$ when $k<4$. However, for the cases when $k\geq 4$ the number of users becomes $K < K_{max}^b$, since in their proposed system the users are divided into two groups. Users of each group transmit at the same power level different from the level of the other group's users. Unlike all the UD codes presented in Tables \ref{table:binary}, \ref{table:antipodal} and \ref{table:ternary}, where each users transmit at the same power level. Based on UWC properties the authors present a simple receiver that alleviates MAI completely.
	
	\vspace{-0.0cm}
	\begin{table*}[h]
		\caption{Binary Codes} \vspace{-0.1cm}
		\centering 
		\begin{threeparttable}
			\begin{tabular}{l l c c c c} 
				\hline\hline  
				\multicolumn{1}{c}{\multirow{2}{*}[-1.5pt]{\bf{Year}}} &
				\multicolumn{1}{l}{\multirow{2}{*}[-1.5pt]{\bf{Authors and Publications}}}  & \multicolumn{1}{c}{\multirow{2}{*}[-1.5pt]{$\bf{n}$}} & \multicolumn{1}{c}{\multirow{2}{*}[-1.5pt]{$\bf{K}$}} & \multicolumn{2}{c}{\multirow{1}{*}[-1.5pt]{\bf{Decoder}}} \\[0.5ex]  \cline{5-6}
				&  &  &  & \multirow{1}{*}[-1.5pt]{\bfseries{Noiseless}} & \multirow{1}{*}[-1.5pt]{\bfseries{AWGN}}\\ [1.0ex]
				\hline   \rule{-3pt}{2.5ex} 
				1963 & S\"{o}derberg and Shapiro \cite{Shapiro1963} & $L$  & $<\gamma(L+1)$ & No  & No\\[0.6ex]
				1964 & Lindstr\"{o}m  \cite{Lindstrom1964}  & $L$ & $\bf{\boldsymbol{\gamma}(L+1)}$\tnote{\dag} & No & No \\[0.6ex]
				1966 & Cantor and Mills \cite{Cantor1966}  & $2^k-1$ & $\bf{k2^{(k-1)}}$ & No & No \\[0.6ex]
				1989 & Martirossian and Khachatrian \cite{Khachatrian1989}  & $L$ & $\bf{\boldsymbol{\gamma}(L+1)}$ & Yes & No \\[0.6ex]
				\hline 
			\end{tabular}
			\footnotesize
			\begin{tablenotes}
				\item[\dag] Code set constructions that achieve the maximum number of vectors $\mathbf{K}_{max}$ are presented in bold.
			\end{tablenotes}
		\end{threeparttable}
		
		\label{table:binary}
	\end{table*}
	\vspace{-0.0cm}
	\begin{table*}[h]
		\caption{Antipodal Codes} \vspace{-0.1cm}
		\centering 
		\begin{tabular}{l l c c c c} 
			\hline\hline  
			\multicolumn{1}{c}{\multirow{2}{*}[-1.5pt]{\bf{Year}}} &
			\multicolumn{1}{l}{\multirow{2}{*}[-1.5pt]{\bf{Authors and Publications}}}  & \multicolumn{1}{c}{\multirow{2}{*}[-1.5pt]{$\bf{n}$}} & \multicolumn{1}{c}{\multirow{2}{*}[-1.5pt]{$\bf{K}$}} & \multicolumn{2}{c}{\multirow{1}{*}[-1.5pt]{\bf{Decoder}}} \\[0.5ex]  \cline{5-6}
			&  &  &  & \multirow{1}{*}[-1.5pt]{\bfseries{Noiseless}} & \multirow{1}{*}[-1.5pt]{\bfseries{AWGN}}\\ [1.0ex]
			\hline   \rule{-3pt}{2.5ex} 
			1964 & Lindstr\"{o}m \cite{Lindstrom1964} & $L$  & $\bf{\boldsymbol{\gamma}(L)+1}$ & No  & No\\[0.6ex]
			1987 & Khachatrian and Martirossian \cite{Khachatrian1987}  & $L$ & $\bf{\boldsymbol{\gamma}(L)+1}$ & No & No \\[0.6ex]
			1995 & Khachatrian and Martirossian \cite{Khachatrian1995}  & $2^k$ & $\bf{k2^{(k-1)}+1}$ & Yes & No \\[0.6ex]
			2012 & Kulhandjian and Pados \cite{michel2012}  & $2^k$ & $\bf{k2^{(k-1)}+1}$ & Yes & No \\[0.6ex]
			2018 & Kulhandjian \textit{et al.} \cite{michel2018}  & $2^k$ & $\bf{k2^{(k-1)}+1}$ & Yes & Yes \\[0.6ex]
			\hline 
		\end{tabular}
		\label{table:antipodal}
	\end{table*}
	\vspace{-0.0cm}
	\begin{table*}[h]
		\caption{Ternary Codes} \vspace{-0.1cm}
		\centering 
		\begin{tabular}{l l c c c c} 
			\hline\hline  
			\multicolumn{1}{c}{\multirow{2}{*}[-1.5pt]{\bf{Year}}} &
			\multicolumn{1}{l}{\multirow{2}{*}[-1.5pt]{\bf{Authors and Publications}}}  & \multicolumn{1}{c}{\multirow{2}{*}[-1.5pt]{$\bf{n}$}} & \multicolumn{1}{c}{\multirow{2}{*}[-1.5pt]{$\bf{K}$}} & \multicolumn{2}{c}{\multirow{1}{*}[-1.5pt]{\bf{Decoder}}} \\[0.5ex]  \cline{5-6}
			&  &  &  & \multirow{1}{*}[-1.5pt]{\bfseries{Noiseless}} & \multirow{1}{*}[-1.5pt]{\bfseries{AWGN}}\\ [1.0ex]
			\hline   \rule{-3pt}{2.5ex} 
			1966 & Cantor and Mills \cite{Cantor1966} & $2^k$  & $\bf{(k+2)2^{(k-1)}}$ & No  & No\\[0.6ex]
			1979 & Chang and Weldon \cite{Chang1979}  & $2^k$ & $\bf{(k+2)2^{(k-1)}}$ & Yes & No \\[0.6ex]
			1982 & Ferguson \cite{Ferguson1982}   & $2^k$ & $\bf{(k+2)2^{(k-1)}}$ & Yes & No \\[0.6ex]
			1984 & Chang \cite{Chang1984}  & $2^k$ & $\bf{(k+2)2^{(k-1)}}$ & No & No \\[0.6ex]
			1998 & Khachatrian and Martirossian \cite{Khachatrian1998}  & $2^k$ & $\bf{(k+2)2^{(k-1)}}$ & Yes & No \\[0.6ex]
			2012 & Mashayekhi and Marvasti \cite{Marvasti2012}  & $2^k$ & $2^{(k+1)}-1$ & Yes & Yes \\[0.6ex]
			2016 & Singh \textit{et al.} \cite{Marvasti2016}   & $2^k$ & $2^{(k+1)}-2$ & Yes & Yes \\[0.6ex]
			2018 & Kulhandjian \textit{et al.} \cite{michel22017}   & $2^k$ & $2^{(k+1)}+2^{(k-2)}-1$ & Yes & Yes \\[0.6ex]
			\hline 
		\end{tabular}
		\label{table:ternary}
	\end{table*}
	
	In this work, for the first time we consider the problem of designing a fast decoder for binary UD code sets, which achieve maximum number of users $K_{max}^b$ presented in \cite{Khachatrian1989}. These recursive construction sets provide one possible constructs out of all UD code set, which are distinct from other known constructs, shown in Table \ref{table:binary}. The proposed decoder is designed in such a way that the user can uniquely recover the information bits with a very simple decoder, which uses only a few comparisons. In contrast to ML decoder, the proposed decoder has much lower computational complexity. Simulation results in terms of bit error rate (BER) demonstrate that the performance of the proposed decoder for a given BER requires only $1-2$ dB higher SNR than the ML decoder.
	
	The rest of the paper is organized as follows. In Section \ref{sec:transmission}, we discuss fiber-optic transmission and assumptions made, followed by the errorless code set construction in Section \ref{sec:construction}. The minimum distance of such code sets is presented in Section \ref{minDist} followed by the noiseless decoding algorithm (NDA) in Section \ref{NDAalgorithm} and fast decoding algorithm (FDA) in Section \ref{fastDecoder}. The complexity analysis is presented in Section \ref{performanceAnalysis}. After illustrating simulation results in Section \ref{simulation}, a few conclusions are drawn in Section \ref{conclusion}.
	
	The following notations are used in this paper. All boldface lower
	case letters indicate column vectors and upper case letters indicate
	matrices, $()^T$ denotes transpose operation, $\mathsf{sgn}$ denotes the sign function, $| . |$ is the scalar magnitude, $|| . ||$ is vector 
    norm, $\oplus$ is the modulo $2$ addition, $\otimes$ is the Kronecker product, $\ceil{.}$ is the ceiling function and $\floor{.}$ is the flooring function.
	\vspace{-0.2cm}
	\section{Fiber-Optic Transmission and Assumptions}
	\label{sec:transmission}
	In the present work, chip synchronization among all the transmitters is assumed. This provides a worst-case estimate of the performance of what is in reality a fully asynchronous system, which only requires chip synchronization between the source transmitter and the target receiver.
	In O-CDMA system using unipolar codes transmission takes place in two main forms. In multiple-access system each user performs on-off keying (OOK), where the bit ``$1$'' is represented by the presence of the codeword while bit ``$0$'' is represented by its absence. In other words, $i$-th user's bit ``$1$'' is the spreading code $\mathbf{c}_i$ and bit ``$0$'' is $\mathbf{0}$. Mathematically, we can formulate the system model as\vspace{-0.2cm}
		\begin{eqnarray}
	\label{system_OOK} \mathbf{y}_1 &=& \sum_{i = 1}^K \mathbf{c}_i x_i  \\  &=& \mathbf{C}\mathbf{x},
	\end{eqnarray}
	\nid where $x_i \in \{0,1\}$ is the $i$-th user's information bit and $\mathbf{C} \in \{0,1\}^{L \times K}$ is the spreading code matrix. The other technique is the bipolar signaling, where the $i$-th user's  bits ``$1$'' and ``$0$'' are represented by the spreading code $\mathbf{c}_i \in \{0,1\}^{K\times 1}$ and $\mathbf{\bar{c}}_i$ binary complement\footnote{Binary complement is defined as $\mathbf{\bar{c}} = \mathbf{c} \oplus \mathbf{J}_{K,1}$.} of $\mathbf{c}_i$, which can be written as\vspace{-0.0cm}
	\begin{eqnarray}
	\label{system_BPS} \mathbf{y}_2 &=& \sum_{i = 1}^K \frac{1}{2} ( \mathbf{c}_i' x_i' + \mathbf{J}_{ L, 1}) \\  &=& \frac{1}{2} (\mathbf{C}'\mathbf{x}'+ K\mathbf{J}_{L, 1}),
	\end{eqnarray}
	\nid where $x_i' = 2 x_i -1 \in \{\pm 1\}$,  $\mathbf{C}' = 2 \mathbf{C} -\mathbf{J}_{L, K} \in \{\pm 1\}^{L \times K}$ and $\mathbf{J}_{t,m}$ matrix contains $(t \times m)$ ones. Notice that bipolar signaling in (\ref{system_BPS}) is an affine transformation of the antipodal system discussed in \cite{michel2018}. In case the bipolar signaling is involved in the transmission system we can rewrite (\ref{system_BPS}) as follows: \vspace{-0.0cm}
	\vspace{-0.0cm}
	\begin{equation}
	\label{system_MOOK} \mathbf{r}_1 = 2\mathbf{y}_2 -K\mathbf{J}_{L, 1} = \mathbf{C}' \mathbf{x}',
	\end{equation}
	\nid where $\mathbf{x}' = 2\mathbf{x}-\mathbf{J}_{K,1} \in \{\pm 1\}^{K\times 1}$. It can be shown that solving (\ref{system_MOOK}) is equivalent to solving (\ref{system_BPS}). Noiseless decoding algorithm in \cite{michel2012} potentially can be used to solve (\ref{system_MOOK}) and in the case of AWGN channel the fast decoding algorithm recently presented in \cite{michel2018} can be utilized. 
	Therefore, in this manuscript we will employ the OOK transmission scheme instead of the bipolar.
	
	\vspace{0.1cm}
	\section{Errorless Code Set Construction}
	\label{sec:construction}
	We recall that a binary code set $\mathbf{C} \in \{0, 1\}^{L \times K}$ is uniquely decodable over signals $\mathbf{x} \in \{ \pm 1 \}^{K\times 1}$ or $\mathbf{x} \in \{ 0, 1 \}^{K\times 1}$,  \textit{if and only if} for any $\mathbf{x}_1 \neq \mathbf{x}_2$, $\mathbf{C}\mathbf{x}_1 \neq \mathbf{C}\mathbf{x}_2$ or, equivalently, $\mathbf{C}(\mathbf{x}_1 - \mathbf{x}_2)\neq \mathbf{0}_{L \times 1}$ \cite{michel2012}. We can rewrite the unique decodability necessary and sufficient condition as 
	\begin{equation}
	\label{null01}
	\mathsf{Null}(\mathbf{C}) \cap \{0,\pm 1\}^{K\times 1} = \{ 0\}^{K\times1}.
	\end{equation}
	Let $\mathcal{C} \in \{0,1\}^{L \times f_2(L)}$ be the set of all possible binary code sets that satisfy the UD condition (\ref{null01}), where $f_2(L)$ is the maximal possible value. As a corollary, any UD code set of $\mathcal{C}$ can be reduced to a $\mathbf{C}_{L\times K}$, where the first $L$ columns form a Hadamard code matrix and still satisfy the condition (\ref{null01}). It can be achieved by simply performing binary complement of each row or column and permuting of each rows and columns of the UD code set. 
	
	In \cite{Khachatrian1989}, the authors present a recursive construction of class of binary UD code set $\mathbf{C}_{L}$ with the size of $(L \times K_{max}^b)$. First, let us define a recursive construction of matrix $\mathbf{B}_k$ with the size of $(t \times r)$, where $r = 2^k - 1$, $t = k2^{k-1}$ and $k = 1, 2, ...$, as follows, \vspace{-0.1cm}
	\begin{eqnarray}
	\label{Bs}
	{\mathbf{B}_{k+1}}=
	\renewcommand{\baselinestretch}{1.2}
	{\footnotesize \setcounter{MaxMatrixCols}{34}
		\begin{bmatrix}
		\mathbf{B}_k&\mathbf{Z}_{t,1}&\mathbf{B}_k\\
		\mathbf{Z}_{1,r}&\mathbf{J}_{1,1}&\mathbf{J}_{1,r}\\
		\mathbf{B}_k&\mathbf{J}_{t,1}&\mathbf{B}_k'\\
		\mathbf{Z}_{r,r}&\mathbf{J}_{r,1}&\mathbf{I}_{r}
		\end{bmatrix}},
	\end{eqnarray}
	where $\mathbf{Z}_{t,m}$ matrix contains $(t \times m)$ zeros, $\mathbf{I}_r$ is the identity matrix of dimension $r$, $\mathbf{B}_k'$ is binary complement of $\mathbf{B}_k$, and $\mathbf{C}_1 = 1$. For the case when $L = 2^k-1$ then the UD code set $\mathbf{C}_{L} = \mathbf{B}_k^T$, otherwise when $2^k \leq L \leq 2^{k+1}-2$,  
	\begin{eqnarray}
	\label{Cs}
	{\mathbf{C}_{L}}=
	\renewcommand{\baselinestretch}{1.2}
	{\footnotesize \setcounter{MaxMatrixCols}{34}
		\begin{bmatrix}
		\mathbf{B}_k&\mathbf{Z}_{t,1}&\mathbf{B}_k^p\\
		\mathbf{Z}_{1,r}&\mathbf{J}_{1,1}&\mathbf{J}_{1,p}\\
		\mathbf{C}_p^r&\mathbf{J}_{T(p),1}&\mathbf{B}_k'\\
		\mathbf{Z}_{p,r}&\mathbf{J}_{p,1}&\mathbf{I}_{p}
		\end{bmatrix}}^T,
	\end{eqnarray}
	where $\mathbf{B}_k^p$ is the first $p$ columns of $\mathbf{B}_k$, and $\mathbf{C}_p^r= [\mathbf{C}_p^T \: \:  \mathbf{Z}_{\gamma(p), r-p}]_{\gamma(p) \times L+r-p}$.
	
	\vspace{0.1cm}
	\section{Minimum Distance of Code sets}
	\label{minDist}	 
	We define the minimum distance among $L$-dimensional two vectors $\mathbf{y}_i$ and $\mathbf{y}_j$ for $i \neq j$ to be
	\begin{eqnarray}
	\label{dist}
	d_L (\mathbf{y}_i, \mathbf{y}_j) = \sum_t^L | y_{i,t} -y_{j,t}|.
	\end{eqnarray}
	Then the general minimum distance of received vectors for a given code set can be formulated by
	\begin{eqnarray}
	\label{MinDistCode}
	d_{min} (\mathbf{C}) = \argmin_{\substack{\mathbf{x}_i,\mathbf{x}_j \in \{0,1\}^{K\times1} \notin \{0\}^{K\times 1} \\ \mathbf{y}_i =\mathbf{C}\mathbf{x}_i,\mathbf{y}_j =\mathbf{C}\mathbf{x}_j}} d_L (\mathbf{y}_i, \mathbf{y}_j).
	\end{eqnarray} 
	\begin{theorem}
		\label{theor1}
		Let $\mathcal{M} \in \{ 0, 1\}^{L \times K}$ represent the set of all binary matrices with distinct columns. Then $d_{min} (\mathcal{M})$ is equal to $1$.
	\end{theorem}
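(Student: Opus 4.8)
The plan is to prove the equality by sandwiching $d_{min}(\mathcal{M})$ between the bounds $1 \le d_{min}(\mathcal{M}) \le 1$. The observation driving both directions is that every received vector $\mathbf{y} = \mathbf{C}\mathbf{x}$ with $\mathbf{C} \in \{0,1\}^{L\times K}$ and $\mathbf{x}\in\{0,1\}^{K}$ has non-negative integer entries, so the quantity $d_L(\mathbf{y}_i,\mathbf{y}_j)=\sum_{t=1}^{L}|y_{i,t}-y_{j,t}|$ appearing in (\ref{dist}) is always a non-negative integer. Both bounds then reduce to elementary facts about when this integer can equal $1$.

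For the lower bound, I would argue that whenever two admissible messages produce distinct received vectors, integrality of $d_L$ forces $d_L(\mathbf{y}_i,\mathbf{y}_j)\ge 1$, so the minimization in (\ref{MinDistCode}) cannot return a positive value smaller than $1$. The distinct-columns hypothesis guarantees that at least the single-user codewords (the columns $\mathbf{c}_k$ themselves) are pairwise different, so a distinguishable message pair exists and the minimum is well defined and at least $1$.

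For the upper bound I would exhibit one member of $\mathcal{M}$ together with one admissible pair attaining $d_L=1$, which certifies $d_{min}(\mathcal{M})\le 1$. The cleanest witness is a matrix two of whose columns, say $\mathbf{c}_a$ and $\mathbf{c}_b$, agree in every coordinate except one, for instance $\mathbf{c}_b=\mathbf{c}_a\oplus\mathbf{e}_t$ with both columns nonzero; taking $\mathbf{x}_i=\mathbf{e}_a$ and $\mathbf{x}_j=\mathbf{e}_b$ gives $\mathbf{y}_i=\mathbf{c}_a$, $\mathbf{y}_j=\mathbf{c}_b$ and hence $d_L(\mathbf{y}_i,\mathbf{y}_j)=1$. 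Equivalently, one keeps a common set of active users fixed and toggles a single weight-one column on or off. Combining the two bounds yields $d_{min}(\mathcal{M})=1$.

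The step I expect to be the main obstacle is the lower bound, specifically making precise what $d_{min}$ means over the whole family $\mathcal{M}$: having merely distinct columns does not imply unique decodability in the sense of (\ref{null01}), so a particular matrix in $\mathcal{M}$ may admit $\mathbf{x}_i\neq\mathbf{x}_j$ with $\mathbf{C}\mathbf{x}_i=\mathbf{C}\mathbf{x}_j$, i.e. a spurious zero distance. I would therefore state explicitly that the minimization in (\ref{MinDistCode}) ranges over message pairs yielding distinct received vectors, so that it measures the minimum separation between resolvable codewords and the integrality argument legitimately gives the bound $1$ rather than $0$. With that convention fixed, the remaining verifications are routine.
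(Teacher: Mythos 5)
Your proposal is correct and follows essentially the same route as the paper: since the received vectors have non-negative integer entries, $d_L$ is a non-negative integer, so any resolvable pair is at distance at least $1$, and a pair whose received vectors differ in a single coordinate by $1$ attains it. The paper's own proof simply asserts that the minimizing difference vector has exactly one non-zero entry of magnitude $1$, whereas you additionally supply an explicit witness (two columns differing in one coordinate, activated singly) and flag the need to exclude zero-distance pairs from non-UD members of $\mathcal{M}$ --- welcome clarifications, but not a different argument.
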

	\begin{proof}
		Assume that $d_{min} (\mathcal{M}) = d_L (\mathbf{y}_n, \mathbf{y}_m) $, where $\mathbf{y}_n = \mathcal{M}\mathbf{x}_{n}$ and $\mathbf{y}_m = \mathcal{M}\mathbf{x}_{m}$. The difference vector $\mathbf{y} = \mathbf{y}_n- \mathbf{y}_m = \mathcal{M}(\mathbf{x}_{n}-\mathbf{x}_{m}) = \mathcal{M}\mathbf{\bar{x}}$ must have one non-zero element $y_c \neq 0$, $y_{n,c} \neq y_{m,c}$, and $L-1$ zeros $y_t = 0$, $y_{n,t} = y_{m,t}$ for $t \neq c$  to achieve $d_{min}$. The minimum values of $y_{n,c}$ and $ y_{m,c}$ with the combination of $0, 1$ can only have $1-(0) = 1$ or $0-1 =-1$. Therefore, we can have $y_{n,c} =1$ and $y_{m,c} =0$ or $y_{n,c} =0$ and $y_{m,c} =1$, which results in both cases $d_{min} (\mathcal{M}) = |y_{n,c} - y_{m,c} | = 1$. 
	\end{proof}
	Now that we proved that $d_{min} (\mathcal{M}) = 1$, we will try to find $d_{min} (\mathbf{C})$ of UD code sets $\mathbf{C}_L \in \mathcal{C} \subset \mathcal{M}$, where $\mathcal{C} \in \{ 0,1\}^{L \times K}$ is the set of all the UD code sets. Based on constructions in (\ref{Cs}), we observe that the last column of the matrices $\mathbf{C}_L$ is $[1, 0, ..., 0]^T$. If we allow the $x_{n,K_{max}^b} \neq x_{m,K_{max}^b}$, and $x_{n,t} = x_{m,t}$ for all $t \notin \{K_{max}^b\}$ then either $y_{n,K_{max}^b} =1$ and $y_{m,K_{max}^b} =0$ or $y_{n,K_{max}^b} =0$ and $y_{m,K_{max}^b} =1$ will result in $d_L (\mathbf{y}_n, \mathbf{y}_m) = 1$. With this specific observation together with the Theorem \ref{theor1}, we conclude that all the recursive constructions in (\ref{Cs}), $d_{min} (\mathbf{C}) = 1$.
	\vspace{0.1cm}
	\section{Noiseless decoding algorithm}
	\label{NDAalgorithm}
	In the following, we describe a recursive algorithm to decode all multiplexed signals in the absence of noise. Suppose that $K$ signals contribute $\{0, 1\}$ information bits and
	\begin{equation}
	\label{system_eq}
	\mathbf{y} = \mathbf{C}\mathbf{x} = \sum_{i=1}^K \mathbf{c}_i x_i,
	\end{equation}
	\nid where $\mathbf{y} \in \mathcal{N}^{L\times 1}$, $\mathcal{N} \in \{0, 1, ..., K\}$ is the multiplexed signal vector, $\mathbf{C}\in
	\{0, 1\}^{L\times K}$ is the proposed code set,
	$\mathbf{c}_i \in \{0, 1\}^{L\times 1}$ is the $i$-th signal signature, $i = 1,\cdots,K$, and $\mathbf{x} \in \{0, 1 \}^{K\times 1}$ is the information
	bit vector. By the design of $\mathbf{C}$, (\ref{system_eq}) has the
	property that all possible $2^K$ bit-weighted sums of the
	$\mathbf{c}_i$ signatures are distinct. This means that we can
	recover $\mathbf{x}$ uniquely and correctly from $\mathbf{y}$. Let $\mathbf{y} = [\mathbf{y}_1^T, y_2, \mathbf{y}_3^T]^T$ and $\mathbf{x} = [\mathbf{x}_1^T, x_2, \mathbf{x}_3^T, \mathbf{x}_4^T]^T$, where $\mathbf{y}_1 \in \mathcal{N}^{r\times1}$, $y_2 \in \mathcal{N}$,  $\mathbf{y}_3 \in \mathcal{N}^{p\times 1}$, $\mathbf{x}_1 \in \{0,1\}^{t\times 1}$, $x_2 \in \{0,1\}$, $\mathbf{x}_3 \in \{0,1\}^{\gamma{(p)}\times 1}$, and $\mathbf{x}_4 \in \{0,1\}^{t\times 1}$. The demultiplexing NDA algorithm for the cases of $L \neq 2^k-1$ is given in direct implementation form in Table \ref{multiuser_decoding}. It is easy to modify NDA for the cases of $L= 2^k-1$.\\
	\vspace{-0.0cm}
	
	
	\begin{table}[ht]
		\vspace{-0.5cm} \caption {}
		\centering  %
		\begin{tabular}{l}
			\hline \hline \rule{0pt}{3ex} 
			\nid \textbf{Noiseless Decoding Algorithm (NDA)}  \\
			\hline \rule{0pt}{3ex} 
			\nid \textbf{{Input}:} $\mathbf{y} = 
			\mathbf{C}\mathbf{x}$ \\
			\hspace{0.3cm} 1: \hspace{0.3cm} $\mathbf{x}_1^T \mathbf{B}_k+ \mathbf{x}_3^T \mathbf{C}_p^r = \mathbf{y}_1^T \, eq.\inlineeqno $  \\
			\hspace{0.3cm} 2: \hspace{0.3cm} $x_2 \mathbf{J}_{1,1}+ \mathbf{x}_3^T \mathbf{J}_{\gamma{(p)},1} = y_2  \inlineeqno$\\ 
			\hspace{0.3cm} 3:  \hspace{0.3cm} $\mathbf{x}_1^T \mathbf{B}_k+ x_2 \mathbf{J}_{1,p} +\mathbf{x}_3^T \mathbf{C}_p' + \mathbf{x}_4^T = \mathbf{y}_3^T \inlineeqno$\\
			\hspace{0.3cm} 4:  \hspace{0.3cm} Select the first $p$ coordinates of (17). \\
			\hspace{0.3cm} 5:  \hspace{0.3cm} $\mathbf{x}_1^T \mathbf{B}_k^p+ \mathbf{x}_3^T \mathbf{C}_p = \mathbf{y}_{1,p}^T \inlineeqno$  \\
			\hspace{0.3cm} 6:  \hspace{0.3cm} Perform  $\mathbf{J}_{1,p}\otimes (18)$ operation. \\
			\hspace{0.3cm} 7:  \hspace{0.3cm} $x_2 \mathbf{J}_{1,p}+ \mathbf{x}_3^T \mathbf{J}_{\gamma{(p)},p} = \mathbf{y}_{2,p}^T \inlineeqno$ \\
			\hspace{0.3cm} 8:  \hspace{0.3cm} Add $(19)$ to $(20)$ and subtract $(21)$. \\
			\hspace{0.3cm} 9:  \hspace{0.3cm}  2$\mathbf{x}_1^T \mathbf{B}_k^p+ \mathbf{x}_4^T = \mathbf{y}_{1,p}^T + \mathbf{y}_3^T  -  \mathbf{y}_{2,p}^T \inlineeqno$  \\
			\hspace{0.3cm}10:  \hspace{0.3cm} Decode $\mathbf{x}_4$ uniquely from $(22)$.  \\
			\hspace{0.3cm}11:  \hspace{0.3cm} $\mathbf{x}_1^T \mathbf{B}_k^p = \frac{1}{2}(\mathbf{y}_{1,p}^T +\mathbf{y}_3^T  -  \mathbf{y}_{2,p}^T -\mathbf{x}_4^T)$ \\
			\hspace{0.3cm}12:  \hspace{0.3cm} Using induction solve for $\mathbf{x}_1$; \\
			\hspace{0.3cm}13:  \hspace{0.3cm} Substitute $\mathbf{x}_1$ in $(20)$ to get $\mathbf{x}_3$. \\
			\hspace{0.3cm}14: \hspace{0.3cm} Substitute $\mathbf{x}_3$ in $(18)$ to get $x_2$. \\
			\nid \textbf{{Output}:} $\mathbf{{x}}$ \\
			\hline
		\end{tabular}\vspace{-0.0cm}
		\label{multiuser_decoding}
	\end{table}
	
	\section{Fast Decoding Algorithm in AWGN}
	\label{fastDecoder}
	The recursive linear NDA decoder discussed in Section \ref{NDAalgorithm} is not suitable for the noisy transmission channel. The received vector form in the presence of noise can be expressed as
	\begin{eqnarray}
	\mathbf{y} &=& A\mathbf{C}\mathbf{x} + \mathbf{n} \\
	&=& \sum_{j=1}^K A\mathbf{c}_j x_j + \mathbf{n},
	\end{eqnarray}
	\noindent where $A$ is the amplitude, $\mathbf{c}_j \in \{0, 1\}^{L\times 1}$ are signatures for $1\leq j \leq K$, $\mathbf{x} \in \{0, 1\}^{K\times 1}$ is user data and $\mathbf{n} \in \mathbb{R}^{L \times 1}$ is the additive white Gaussian noise (AWGN) channel noise vector. The objective of the receiver is the following; given the received vector $\mathbf{y}$ and $\mathbf{C}$ recover the user data $\mathbf{\hat{x}}$ such that the mean square error $E\{||\mathbf{x}-\mathbf{\hat{x}}||^2\} $ is minimized. It is known that obtaining the ML solution is generally NP-hard \cite{Lupas1989}.

	For our detection problem, where the overloaded signature matrix has UD structure, can be solved efficiently if there is a function that maps $\mathbf{y} \mapsto \widehat{\mathbf{y}} \in \Lambda \subset \mathcal{N}^{L\times 1}$, where $\Lambda$ is a $\mathbb{Z}$-module\footnote {A module over a ring $R$ is an Abelian group $M$, which can be considered as a generalization of the notion of vector space over a field.} with rank $L$. It is equivalent to finding the closest point in a lattice $\Lambda$, such that 
	\begin{eqnarray}
	\label{MinDistY}
	\widehat{\mathbf{y}} = \argmin_{\mathbf{y}' \in \mathcal{N}^{L\times 1}} d_L (\mathbf{y}, \mathbf{y}').
	\end{eqnarray}
	Gaining the knowledge of $\widehat{\mathbf{y}}$, one of the points in $\Lambda$ generated by $\mathbf{C}$, we can obtain  $\mathbf{\hat{x}}$ uniquely using NDA algorithm, since $\mathbf{C}$ satisfies the uniquely decodability criteria (\ref{null01}). However, there is no known polynomial algorithm to obtain $\widehat{\mathbf{y}}$ from $\mathbf{y}$. 
	
	Without loss of generality, our proposed simplified ML approach uses the fact that the first row of the $\mathbf{C}_L$ under consideration is all ones. Though it does not necessarily imply that our proposed fast decoder cannot be applied to the recursive UD code sets constructed by (\ref{Cs}). It will only require a slight modification  such as binary complementing, permuting rows and columns. We are ready to present the general form of the fast decoding algorithm (FDA) for the $\mathbf{C}_L$, $L = 2^k -1$ and $L \neq 2^k -1$ cases, where the quantizer $Q : \mathbb{R} \mapsto \mathcal{N} $,  $z_1 = Q(y, 0, K)$ is a mapping of $y \in \mathbb{R}$ to the constellation of $\{0, 1, ..., K\}$. 
	\vspace{-0.0cm}
	\begin{table}[ht]
		\caption {} 
		\vspace{-0.5cm}
		\begin{center}
			\begin{tabular}{l}
				\hline \hline \rule{0pt}{3ex} 
				\nid \textbf{Fast Decoder Algorithm (FDA)}  \\
				\hline \rule{0pt}{3ex} 
				\nid \textbf{{Input}:} $\mathbf{y}$ \\
				\hspace{0.3cm} 1: $z_1 \gets Q(y_1, 0,K)$ \\
				\hspace{0.3cm} 2: \textbf{if} $|z_1| = K$,   $\mathbf{\hat{x}}\gets \mathbf{1}_K$\\ 
				\hspace{0.3cm} 3:  \textbf{else}\\
				\hspace{0.3cm} 4:  \hspace{0.3cm} $\mathbf{m} \gets -\mathbf{1}_K$, $r_c \gets 1$, $n \gets z_1$\\
				\hspace{0.3cm} 5:  \hspace{0.3cm} $m_{LR}(r_c,3) \gets n$\\
				\hspace{0.3cm} 6:  \hspace{0.3cm} $dP(r_c) \gets [n, K, m_{LR}(r_c,1), m_{LR}(r_c, 2), mP(r_c)]$\\
				\hspace{0.3cm} 7:  \hspace{0.3cm} $\mathbf{c}_{AL} \gets \mathbf{0}$, $\mathbf{z} \gets \mathbf{0}$, $s_{I}\gets 1$, $c_{T} \gets 1$\\
				\hspace{0.3cm} 8:  \hspace{0.3cm} \textbf{while} ($s_{I}=1$ AND $c_{T}< N_c$)\\
				\hspace{0.3cm} 9:  \hspace{0.6cm} $s_{I}\gets 0$\\
				\hspace{0.3cm}10:  \hspace{0.6cm} \textbf{while} ($r_c < L$, $r_c \gets r_c +1$)\\
				\hspace{0.3cm}11:  \hspace{0.8cm} $[dP(r_c),\!m]\!\! \gets\!\! meP(dP(r_c-1), \mathbf{m}, n, K, r_c, m_{LR}, mP)$\\
				\hspace{0.3cm}12:  \hspace{0.8cm} $A_{min} \gets minT(dP(r_c))$, $A_{max} \gets maxT(dP(r_c))$\\
				\hspace{0.3cm}13:  \hspace{0.8cm} $\mathbf{z}(r_c) \gets Q(y', A_{min}, A_{max},1)$\\
				\hspace{0.3cm}14:  \hspace{0.8cm} $\mathbf{c}_{AL}(r_c,2) \gets (A_{min}- A_{max}) +1$\\
				\hspace{0.3cm}15:  \hspace{0.8cm} $m_{LR}(r_c,3) \gets \mathbf{z}(r_c)$\\
				\hspace{0.3cm}16:  \hspace{0.8cm} $m_{LR}(r_c,4) \gets n- m_{LR}(r_c,3)$ \\
				\hspace{0.3cm}17:  \hspace{0.8cm} $\mathbf{m} \gets uM(\mathbf{m}, m_{LR}, r_c, mP)$ \\
				\hspace{0.3cm}18:  \hspace{0.6cm} $\mathbf{m} \gets f_c(\mathbf{m}, m_{LR})$, $\mathbf{t}_D \gets \mathbf{z} - \mathbf{C}\mathbf{m}$\\
				\hspace{0.3cm}19:  \hspace{0.6cm}  \textbf{if} $\mathbf{t}_D \notin \mathbf{0}$,   $s_{I}\gets 1$, $r_c \gets i_d$   \\
				\hspace{0.3cm}20:  \hspace{0.9cm} $\mathbf{c}_{AL}(r_c+1,1) \gets  \mathbf{c}_{AL}(r_c+1,1)+1$ \\ 
				\hspace{0.3cm}21:  \hspace{0.9cm} $c_{T} \gets c_T +1$ \\
				\hspace{0.3cm}22:  \hspace{0.3cm} $\mathbf{\hat{x}} \gets \mathbf{m} $  \\
				\nid \textbf{{Output}:} $\mathbf{\hat{x}}$ \\
				\hline
			\end{tabular}\vspace{-0.0cm}
		\end{center}
	\end{table}
	\newline
	Furthermore, let $n$ and vector $\mathbf{m}$ denote the number of $+1$s and locations in $\mathbf{\hat{x}}$, respectively. Note that when $z_1 = K$ only one comparison is required. The algorithm proceeds by partitioning each row $dP(r_c)$, recording $n$, $K'$ the length of partition, $L'$ and $R'$ are lengths of $+1$s and $0$s in that specific partition of the row. Whereas the table $m_{LR}$ keeps track of $L$ and $R$, which are the lengths of $+1$s and $0$s of the row, $nL$, $nR$, the number of $+1$s in the $+1$s and $0$s locations of the row, respectively, and $mP(r_c)$ is the actual column indices of $+1$s and $0$s at each row. The function $meP(dP(r_c-1), \mathbf{m}, n, K, r_c, m_{LR}, mP)$ scans each partition of the row with updated values and if it finds one or more partitions that hits the boundaries it will partition further or completely define, in other words, it knows the exact locations of $0$s in that partition. $A_{min}$ and $A_{max}$ are minimum and maximum calculated given partitions at each row, in line $13$ of FDA algorithm, we define $y'= y(r_c)+2\mathsf{sgn}(y(r_c)-z(r_c))c_{AL}(r_c,1)$, and in line $17$ $uM(\mathbf{m}, m_{LR}, r_c, mP)$ updates $\mathbf{m}$ with the given updated parameters and function $f_c(\mathbf{m}, m_{LR})$ re-calculates all the locations of $+1$s and $0$s in $\mathbf{m}$ based on the UD structure of $\mathbf{C}$ codes. In case the information in $\mathbf{m}$ do not correspond to $\mathbf{z}$, which is verified in line $19$ then it sets the $r_c$ to index $i_d$ in which $\mathbf{t}_D(i_d) \neq 0$, where the discrepancy happened and re-runs from line $10$ until it finds $\mathbf{m}$ that correspondence to $\mathbf{z}$.  
	
	\section{Complexity Analysis}
	\label{performanceAnalysis}
	In this section, we discuss the complexity analysis of the proposed NDA and FDA algorithms. The NDA decoder for the noiseless transmission channels discussed in Section \ref{NDAalgorithm}, deciphers all the users data at the receiver side in a recursive manner. At each step it performs addition and comparisons to decipher the bits for the users. 
	From the equation ($22$), we can uniquely identify $\mathbf{x}_4$, since it is the only vector that contributes to the resultant vector to be odd. Given the right hand side of ($22$) if any element is odd that means the same element in $\mathbf{x}_4$ is $+1$ otherwise it is $0$. Then using induction method it is possible to obtain $\mathbf{x}_1$. Finally, substituting $\mathbf{x}_1$ into $(20)$ to obtain $\mathbf{x}_3$, after which substituting $\mathbf{x}_3$ into $(18)$ to obtain $x_2$, respectively. The algorithm returns all the decoded $K$ bits, which results in linear complexity $\mathcal{O}(K)$. Generally, one would accept that the complexity of a decoder in noisy channels is much higher than in noiseless channels. However, the complexity of the proposed FDA decoder presented in Section \ref{fastDecoder} is not any worse than NDA in terms of the Big-$\mathcal{O}$ notation. It is important to state that the beauty of the proposed FDA lies in the fact that it neither requires any matrix inversion nor decomposition, instead only a few comparisons are performed in the quantizer $Q(\cdot)$, i.e., multiplications and additions. The algorithm goes through each row of the received vector to decode one or more users. Unlike the noiseless transmission FDA algorithm can repeat the decoding process again from a row that was previously been decoded by $N_c$ times to improve the results. As a result, it performs $LN_c$ times instead of $L$. Therefore, the average complexity of FDA algorithm still remains linear in $K$, i.e., $\mathcal{O}(K)$, since $N_c$ is simply a constant.

	\section{Simulation results}
	\label{simulation}
	In this section, we evaluate the performance of the proposed antipodal UD code sequences generated by (\ref{Cs}), which are shown in Figs. \ref{C_4_5_01} and \ref{C_8_13_02}. 
	\begin{figure}[h]
		\centering
		\begin{center}
			\begin{eqnarray}
			{\bf C}_{4\times 5}=
			\renewcommand{\baselinestretch}{1}
			{\tiny \setcounter{MaxMatrixCols}{34}
				\begin{bmatrix}
				0& \m 0&\m 0&\m 0&\m 1\\
				1&\m 1&\m 0&\m 1&\m 0\\
				0&\m 1&\m 1&\m 0&\m 0\\
				1&\m 0&\m 1&\m 0&\m 0\\
				\end{bmatrix}}
			\nonumber
			\end{eqnarray}
			\caption{UD code set $\mathbf{C}$ with $L=4$ and $K=5$.} \label{C_4_5_01}
		\end{center}
	\end{figure}
	\begin{figure}[h]
		\begin{eqnarray}
		{\bf C}_{8\times 13}=
		\renewcommand{\baselinestretch}{1}
		{\tiny \setcounter{MaxMatrixCols}{34}
			\begin{bmatrix}
			0&\m 0&\m 0&\m 0&\m 0&\m 0&\m 0&\m 0&\m 0&\m 0&\m 0&\m 0&\m 1\\
			1&\m 1&\m 0&\m 1&\m 1&\m 0&\m 0&\m 1&\m 0&\m 0&\m 0&\m 1&\m 0\\
			0&\m 1&\m 1&\m 0&\m 1&\m 1&\m 0&\m 0&\m 1&\m 0&\m 1&\m 0&\m 0\\
			1&\m 0&\m 1&\m 0&\m 1&\m 0&\m 1&\m 0&\m 1&\m 1&\m 0&\m 0&\m 0\\
			0&\m 0&\m 0&\m 0&\m 1&\m 1&\m 1&\m 1&\m 1&\m 0&\m 0&\m 0&\m 0\\
			1&\m 1&\m 0&\m 1&\m 0&\m 1&\m 1&\m 0&\m 1&\m 0&\m 0&\m 0&\m 0\\
			0&\m 1&\m 1&\m 0&\m 0&\m 0&\m 1&\m 1&\m 0&\m 0&\m 0&\m 0&\m 0\\
			1&\m 0&\m 1&\m 0&\m 0&\m 1&\m 0&\m 1&\m 0&\m 0&\m 0&\m 0&\m 0\\
			\end{bmatrix}}
		\nonumber
		\end{eqnarray}
		\caption{UD code set $\mathbf{C}$ with $L=8$ and $K=13$.} \label{C_8_13_02}
	\end{figure}

	In our simulations, we compare the proposed decoder with ML decoder and the PDA proposed in \cite{Romano2005} to decipher the proposed code set sequences. The comparison in our simulation is performed with PDA algorithm alone as it has the best performance compared to other decoding algorithms (e.g., MF, MMSE, PIC, etc.).  In Figs. \ref{C4by5} and \ref{C8by13}, we plot the BER performance averaged over the different users for $\mathbf{C}_{4 \times 5}$ and $\mathbf{C}_{8\times 13}$, respectively. As we can see from Figs. \ref{C4by5} and \ref{C8by13} for a BER of $10^{-3}$ the performance of the proposed detector is only about 0.2 dB and 1 dB inferior compared to the ML decoder for UD code sets $\mathbf{C}_{4 \times 5}$ and $\mathbf{C}_{8 \times 13}$, respectively. While the performance of the PDA suffers significantly compared to our purposed decoder. Although the BER performance of the proposed decoder is slightly higher than that of the ML detector, it is much less complex and less costly to implement compared to the ML decoder.

	\vspace{-0.2cm}
	\begin{center}
		\begin{figure}[h]
			\centering
			\includegraphics[width=3.85 in]{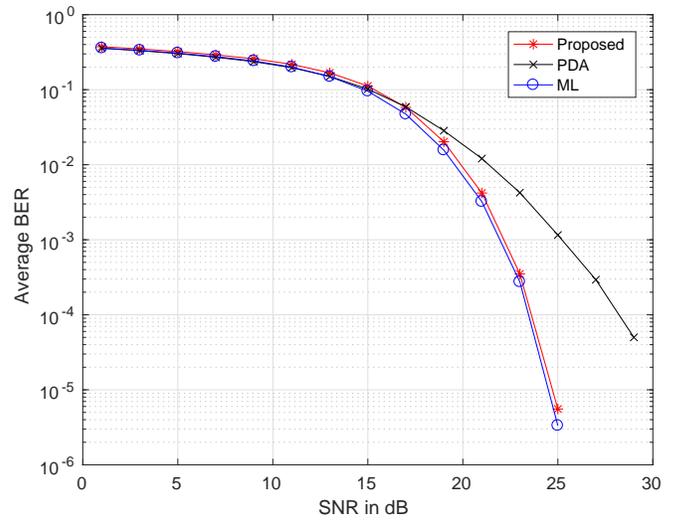}
			\centering \caption{UD code set $\mathbf{C}_{4 \times 5}$.} \label{C4by5}
		\end{figure}
	\end{center}
	\vspace{-0.4cm}
	\begin{center}
		\begin{figure}[h]\vspace{-0.9cm}	
			\centering
			\includegraphics[width=3.85 in]{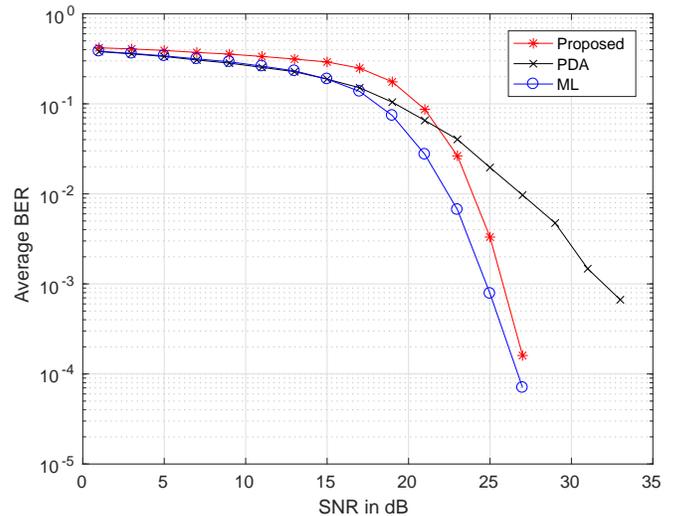}
			\centering \caption{UD code set $\mathbf{C}_{8 \times 13}$.} \label{C8by13}
		\end{figure}
	\end{center}\vspace{-0.4cm}
	\section{Conclusion}
	\label{conclusion}
	In this paper, we have introduced a novel fast decoder algorithm (FDA) for  uniquely decodable (UD) code sets for overloaded synchronous
	optical code-division multiple-access (O-CDMA). The proposed simple decoder uses only a few comparisons and can allow the user to uniquely recover the information bits at the receiver side. The proposed algorithm has much lower computational complexity compared to maximum-likelihood (ML) decoder, which has a high complexity even for moderate code lengths. Simulation results show that the performance of the proposed decoder is almost as good as the ML decoder. 
	


\begin{thebibliography}{00}
		\bibitem{heritage2007} J. P. Heritage and A. M. Weiner, ``Advances in spectral optical code-division multiple-access communications,'' \textit{IEEE Journal of Selected Topics in Quantum Electronics}, vol. 13, no. 5, pp. 1351-1369, Oct. 2007.
		
		\bibitem{gibson2004} G. Gibson, J. Courtial, M. Padgett, M. Vasnetsov, V. Pas’ko, S. Barnett, and S. Franke-Arnold, ``Free-space information transfer using light beams carrying orbital angular momentum,'' \textit{Optics express}, vol. 12, no. 22, pp. 5448-5456, 2004.
		
		\bibitem{alzenad2018} M.  Alzenad  M.  Z.  Shakir,  H.  Yanikomeroglu,  and  M.-S.  Alouini, ``FSO-based  vertical  backhaul/fronthaul  framework  for  5G+  wirelessnetworks,'' \textit{IEEE Communications Magazine}, vol. 56, no. 1, pp. 218-224, Jan. 2018
		
		\bibitem{malik2015} A. Malik and P. Singh, ``Free space optics: current applications and future challenges,'' \textit{International Journal of Optics}, vol. 2015, no. Article ID 945483, pp. 7 Pages, 2015.
		
		\bibitem{Fan1989} F. R. K. Chung, J. A. Salehi and V. K. Wei, ``Optical orthogonal codes: design, analysis and applications,'' \textit{IEEE Trans. on Inf. Theory}, vol. 35, no. 3, pp. 595-604, May 1989.
		
		\bibitem{Maric1993} S. V. Maric, ``New family of algebraically designed optical orthogonal codes for use in CDMA fibre-optic networks,'' \textit{Electronics Letters}, vol. 29, no. 6, pp. 538-539, 18 March 1993.
		
		\bibitem{Svetislav1993} S. V. Maric, Z. I. Kostic and E. L. Titlebaum, ``A new family of optical code sequences for use in spread-spectrum fiber-optic local area networks,'' \textit{IEEE Trans. on Commun.}, vol. 41, no. 8, pp. 1217-1221, Aug. 1993.
		
		\bibitem{Verdu1999} S. Verdu and S. Shamai, ``Spectral efficiency of CDMA with random spreading,'' \textit{IEEE Trans. on Info. Theory}, vol. 45, no. 2, pp. 622-640, March 1999.
		
		\bibitem{Kobayashi2001} M. Kobayashi, J. Boutros, and G. Caire, ``Successive interference cancellation with SISO decoding and EM channel estimation,'' \textit{IEEE  J. Sel. Areas Commun.}, vol. 19, no. 8, pp. 1450-1460, Aug. 2001.
		
		\bibitem{Guo2000} D. Guo, L. K. Rasmussen, S. Sun, and T. J. Lim, ``A matrix-algebraic
		approach to linear parallel interference cancelation in CDMA,'' \textit{IEEE Trans. on Commun.}, vol. 48, no. 1, pp. 152-161, Jan. 2000.
		
		\bibitem{Xue1999} G. Xue, J. Weng, T. Le-Ngoc, and S. Tahar, ``Adaptive multistage parallel interference cancelation for CDMA,'' \textit{IEEE J. Sel. Areas
			Commun.}, vol. 17, no. 10, pp. 1815-1827, Oct. 1999.
		
		\bibitem{Wang1999} X. Wang and H. V. Poor, ``Iterative (turbo) soft interference cancelation and decoding for coded CDMA,''
		\textit{IEEE Trans. on Commun.}, vol. 47, no. 7, pp. 1046-1061, July 1999.
		
		\bibitem{Kumar2010} P. Kumar and S. Chakrabarti, ``An analytical model of iterative interference cancellation receiver for orthogonal/orthogonal overloaded DS-CDMA system,''
		\textit{Int. J. Wirel. Inf. Netw.}, vol. 17, no. 1-2, pp. 64-72, June 2010.
		
		\bibitem{Sasipriya2014} S. Sasipriya and C. S. Ravichandran, ``Performance analysis of overloaded CDMA system under imperfect synchronization using parallel/successive interference cancellation,''
		\textit{Telecommun. Syst.}, vol. 56, no. 4, pp. 509-518, Aug. 2014.
		
		\bibitem{Romano2005} G. Romano, F. Palmieri and P. K. Willett,
		``Soft iterative decoding for overloaded CDMA,'' in {\em Proc. IEEE Int. Conf. on Acoust., Speech, and Sig. Proc. (ICASSP) 2005}, vol. 3, pp. 733-736, Philadelphia, PA, U.S.A., March 2005.
		
		\bibitem{Manglani2006} G. Manglani and A. K. Chaturvedi, ``Application of computational geometry to multiuser detection in CDMA,''
		\textit{IEEE Trans. on Commun.}, vol. 54, no. 2, pp. 204-207, Feb. 2006.
		
		\bibitem{Najkha2005} A. Najkha, C. Roland, and E. Boutillon,
		``A near-optimal multiuser detector for MC-CDMA systems using geometrical approach,'' in {\em Proc. IEEE Int. Conf. on Accoust., Speech, and Sig. Proc. (ICASSP) 2005}, vol. 3, pp. 877-880, Philadelphia, PA, U.S.A., March 2005.
		
		\bibitem{Erdos1963} R. Erd\"{o}s and A. R\'enyi, ``On two problems of information theory,'' {\em Pub. Hung. Acad. Sci.}, vol. 8, pp. 241-254, 1963.
		
		\bibitem{Shapiro1963} S. S\"{o}derberg and H. S. Shapiro, ``A Combinatory Detection Problem,'' {\em Amer. Math. Monthly}, vol. 70, no. 10, pp. 1066-1070, Dec. 1963.
		
		\bibitem{Cantor1964} D. G. Cantor, ``Determining a set from the cardinalities of its intersections with other sets,'' {\em Can. J. Math.}, vol. 16, pp. 94-97, Feb. 1964.	
		
		\bibitem{Cantor1966} D. G. Cantor and W. H. Mills, ``Determining a subset from certain
		combinatorial properties,'' {\em Can. J. Math.}, vol. 18, pp. 42-48, Feb. 1966.
		
		\bibitem{Mow2009} W. H. Mow, ``Recursive constructions of detecting matrices for multiuser
		coding: A unifying approach,'' \textit{IEEE Trans. Info. Theory}, vol. 55, no. 1, pp. 93-98, Jan. 2009.
		
		\bibitem{Lindstrom1965} B. Lindstr\"{o}m, ``On a combinatorial problem in number theory,'' {\em Can.
			Math. Bull.}, vol. 8, no. 4, pp. 477-490, March 1965.	
		
		\bibitem{Erdos1970} P. Erd\"{o}s, ``Some problems in additive number theory,'' {\em Amer. Math. Monthly}, vol. 77, no. 6, pp. 619-621, July 1970.		
		
		\bibitem{Chang1979} S. C. Chang and E. Weldon,
		``Coding for T-user multiple-access channels,'' \textit{IEEE Trans. on Info. Theory}, vol. 25, no. 6, pp. 684-691, Nov. 1979.
		
		\bibitem{Ferguson1982} T. Ferguson,
		``Generalized T-user codes for multiple-access channels (Corresp.),'' \textit{IEEE Trans. on Info. Theory}, vol. 28, no. 5, pp. 775-778, Sept. 1982.	
		
		\bibitem{Khachatrian1998} G. H. Khachatrian and S. S. Martirossian, ``Code construction for the T-user noiseless adder channel,'' \textit{IEEE Trans. on Info. Theory}, vol. 44, no. 5, pp. 1953-1957, Sept. 1998.
		
		\bibitem{Khachatrian1987} G. H. Khachatrian and S. S. Martirossian, ``Codes for T-user noiseless adder channel,'' {\em Prob. Control Inf. Theory}, vol. 16, no. 3, pp. 187-192, 1987.
		
		\bibitem{Chang1984} S. C. Chang, ``Further results on coding for T-user multiple-access channels (Corresp.),'' \textit{IEEE Trans. on Info. Theory}, vol. 30, no. 2, pp. 411-415, March 1984.
		
		\bibitem{Marvasti2009} P. Pad, F. Marvasti, K. Alishahi, and S. Akbari,
		``A class of errorless codes for overloaded synchronous wireless and optical CDMA systems,'' \textit{IEEE Trans. on Inf. Theory}, vol. 55, no. 6, pp. 2705-2715, June 2009.
		
		\bibitem{Alishahi2012} K. Alishahi, S. Dashmiz, P. Pad, and F. Marvasti,
		``Design of Signature Sequences for Overloaded CDMA and Bounds on the Sum Capacity With Arbitrary Symbol Alphabets,'' \textit{IEEE Trans. Inform. Theory}, vol. 58, no. 2, pp. 1441-1468, March 2012.
		
		
		\bibitem{Khachatrian1989} S. S. Martirossian and G. H. Khachatrian, ``Construction of signature
		codes and the coin weighing problem,'' {\em Probl. Peredachi Inf.}, vol. 25, no. 4, pp. 96-97, Sept. 1989.
		
		\bibitem{Khachatrian1995} G. H. Khachatrian and S. S. Martirossian, ``A new approach to the design of codes for synchronous-CDMA systems,'' \textit{IEEE Trans. on Info. Theory}, vol. 41, no. 5, pp. 1503-1506, Sept. 1995.
		
		\bibitem{michel2012} M. Kulhandjian and D. A. Pados,
		``Uniquely decodable code-division via augmented Sylvester-Hadamard matrices,'' in {\em Proc. IEEE Wireless Communications and Networking Conference (WCNC) 2012}, pp. 359-363, Paris, France, Apr. 2012.
		
		\bibitem{michel2018} M. Kulhandjian, C. D'Amours, H. Kulhandjian, H. Yanikomeroglu, D. A. Pados, and G. Khachatrian, ``Fast Decoder for Overloaded Uniquely Decodable Synchronous CDMA'', In: \textit{ArXiv e-prints}, June 2018,  arXiv, 1806.03958 [eess.SP]. [Online]. available: https://arxiv.org/abs/1806.03958
		
		\bibitem{Lupas1989} R. Lupas and S. Verdu, ``Linear multiuser detectors for synchronous code-division multiple-access channels,'' \textit{IEEE Trans. on Info. Theory}, vol. 35, no. 1, pp. 123-136, Jan. 1989.
		
		\bibitem{Marvasti2013} S. A. Nezamalhosseini, M. A. Sedaghat, F. Marvasti and M. Ghassemian,
		``Interference cancellation in overloaded optical CDMA systems using unipolar walsh codes,'' in {\em Proc. 20th Int'l. Conf. Telecomm. (ICT) 2013}, pp. 1-5, Casablanca, Morocco, May 2013.
		
		\bibitem{Lindstrom1964} B. Lindstr\"{o}m, ``On a combinatorial detection problem,'' {\em Pub.
			Hung. Acad. Sci.}, vol. 9, pp. 195-207, 1964.	
		
		\bibitem{Marvasti2012} O. Mashayekhi and F. Marvasti,
		``Uniquely decodable codes with fast decoder for overloaded synchronous CDMA systems,'' \textit{IEEE Trans. on Communications}, vol. 60, no. 11, pp. 3145-3149, Nov. 2012.
		
		\bibitem{Marvasti2016} A. Singh, A. Amini, P. Singh, and F. Marvasti,
		``Set of uniquely decodable codes for overloaded synchronous CDMA,'' \textit{IET Commun.}, vol. 10, no. 10, pp. 1236-1245, Jul. 2016.
		
		
		
		\bibitem{michel22017} M. Kulhandjian, C. D'Amours, and H. Kulhandjian,
		``Uniquely decodable ternary codes for synchronous CDMA systems,'' in {\em Proc. IEEE Pers., Indoor, Mobile Radio Conf. (PIMRC) 2018}, Bologna, Italy, Sept. 2018.
		

 


		
		
		
		
		
	\end{thebibliography}
\end{document}